\newcommand{\vectornorm}[1]{{\left|\left|#1\right|\right|}_{\ell_2}} 
\newcommand{\vectornormZero}[1]{{\left|\left|#1\right|\right|}_{\ell_0}} 
\newcommand{\vectornormInf}[1]{{\left|\left|#1\right|\right|}_{\ell_\infty}} 
\newtheorem{theorem}{Theorem}[section]
\newenvironment{proof}[1][Proof]{\begin{trivlist}
\item[\hskip \labelsep {\bfseries #1}]}{$\blacksquare$ \end{trivlist}}
\begin{document}

%===== Title page
\title{One-Step Quantized Network Coding for Near Sparse Gaussian Messages}
\author{Mahdy Nabaee and Fabrice Labeau
\thanks{M. Nabaee and F. Labeau are with the Department
of Electrical and Computer Engineering, McGill University, Montreal.}
\thanks{This work was supported by Hydro-Québec, the Natural Sciences and Engineering Research Council of Canada and McGill University in the framework of the NSERC/Hydro-Québec/McGill Industrial Research Chair in Interactive Information Infrastructure for the Power Grid.}
}
\maketitle

\begin{abstract}
In this paper, mathematical bases for non-adaptive joint source network coding of correlated messages in a Bayesian scenario are studied.
Specifically, we introduce one-step Quantized Network Coding (QNC), which is a hybrid combination of network coding and packet forwarding for transmission.
Motivated by the work on Bayesian compressed sensing, we derive theoretical guarantees on robust recovery in a one-step QNC scenario.
Our mathematical derivations for Gaussian messages express the opportunity of distributed compression by using one-step QNC, as a simplified version of QNC scenario.
Our simulation results show an improvement in terms of quality-delay performance over routing based packet forwarding.
\end{abstract}

\section{Introduction}
\label{sec:Intro}

Unlike routing based \cite{al2004routing} packet forwarding, network coding \cite{ahlswede2000network} offers advantages in terms of flexibility and robustness to deployment changes, for data gathering in sensor networks.
Beyond that, in noisy networks, resistance to link failures and errors has made network coding a better alternative for packet forwarding, as a result of flow diversity in the network \cite{lim2011noisy}.
In this paper, we aim to exploit the possibility of \textit{non-adaptive} distributed source coding by introducing a hybrid network coding and packet forwarding transmission method for correlated sensed data (referred as \textit{messages} in this paper).

Compressed sensing concepts have been recently used in the networks to develop non-adaptive approaches for efficient communication \cite{rabbat,feizi2011power}.
Distributed source coding \cite{xiong2004distributed}, embedded in network coding is recently studied in \cite{naba1,bassi2012compressive}, where the compressed sensing \cite{CS} is joined with network coding.
Specifically, in \cite{naba1}, we have proposed and formulated \textit{Quantized Network Coding} (QNC), which incorporates random linear network coding in real field and quantization to cope with the limited capacity of the links. 
By using restricted isometry property, theoretical guarantees for robust $\ell_1$-min recovery \cite{naba1} in non-Bayesian QNC scenario was discussed in \cite{naba2}.
In Bayesian scenarios (where a priori is known beyond the sparsity domain), to tackle the design of a practical and near optimal decoder, we proposed a Belief Propagation (BP) based Minimum Mean Square Error (MMSE) decoder \cite{naba3}.
In this paper, we provide theoretical bases to justify adopting Bayesian quantized network coding for correlated Gaussian messages.
Specifically, we derive the required number of received packets for robust recovery of messages, implying an embedded distributed compression of messages, while keeping other advantages of network coding over packet forwarding.

In section~\ref{sec:ProbDesc}, we describe the assumptions on the random network deployment and the used model on near sparse messages.
Then, in section~\ref{sec:QNC}, we introduce and formulate the one-step QNC, and derive theoretical guarantees for robust recovery of near sparse messages.
Our simulation results are presented in section~\ref{sec:simRes}, which is followed by our conclusions in section~\ref{sec:Conclusions}.

\section{Problem Description and Notation}
\label{sec:ProbDesc}
Consider a directed network (graph), $\mathcal{G}=(\mathcal{V},\mathcal{E})$, where $\mathcal{V}=\{1,\cdots,n\}$, and $\mathcal{E}=\{1,\cdots,|\mathcal{E}|\}$, are the sets of nodes and edges (links), respectively.
Each edge, $e$, maintains a lossless communication from $tail(e)$ to $head(e)$, at a maximum rate of $C_e$ bits per use.
This implies the same input and output contents for edge $e \in \mathcal{E}$, denoted by $Y_e(t)$, where $t$ is the time index, representing transmission of a block of length $L$ over the edges.
The edges are uniformly distributed between the pairs of nodes. Explicitly, for each $e \in \mathcal{E}$, we have:
\begin{equation}
\textbf{P}\Big(tail(e)=v,head(e)=v'\Big)=\frac{1}{n(n-1)},~\forall v,v' \in \mathcal{V},
\end{equation}
where $v \neq v'$.
The sets of incoming and outgoing edges of node $v$, are also defined as:
\begin{eqnarray}
In(v)&=&\{e:head(e)=v\}, \\
Out(v)&=&\{e:tail(e)=v\}.
\end{eqnarray}

Each node has a random information source (message) $X_v$, where $X_v \in \mathbb{R}$.
These random messages, $\underline{X}=[X_v:v \in \mathcal{V}]$, are correlated and their correlation can be modeled by using an orthonormal transform matrix, $\phi_{n \times n}$, for which $\underline{S}=\phi^T \underline{X}$ is almost $k$-sparse: $\vectornormZero{\underline{S}} \simeq k$.
In this paper, we assume $S_v$'s are drawn from a two-state Gaussian mixture model, such that there is a random state $Q_v$, for which:
\begin{eqnarray}
Q_v=1 & \rightarrow & S_v \sim \mathcal{N}(0,\sigma^2_s), \nonumber\\
Q_v=0 &\rightarrow & S_v \sim \mathcal{N}(0,\sigma^2_z), \label{Eq:modelMess}
\end{eqnarray}
and $\sigma^2_s$ and $\sigma^2_{z}$, with $\sigma^2_{s} \gg \sigma^2_z$, are the variances of large and small elements of near sparse domain.

In the described noiseless network, we need to transmit messages, $X_v$'s, to a single gateway (decoder) node, $v_0 \in \mathcal{V}$ and recover them with a small or no distortion.
Performing distributed source coding \cite{xiong2004distributed} along packet forwarding is used to transmit correlated messages.
But, dealing with the aforementioned network coding advantageous, we are interested to establish theoretical guarantees for joint source network coding, in the non-adaptive QNC scenario.

\section{One-Step Quantized Network Coding}
\label{sec:QNC}

We have defined and formulated QNC and its Bayesian counterpart in \cite{naba1,naba3}.
In this section, we describe a special case of QNC, in which QNC is done only at the first time instance, \textit{i.e.} $t=2$. The resulting quantized network coded packets are then forwarded to the decoder node for recovery. 
In such scenario, we discuss robust recovery of messages and derive mathematical guarantees for local network coding coefficients.
One-step QNC is a hybrid method which consists of two stages of transmission, as explained in the following.

Assuming initial rest condition in the network, we have: $Y_e(1)=0,~\forall e \in \mathcal{E}$.
The messages, $X_v$, are available for transmission at $t=1$.
Therefore, the passed packets between the nodes and received at $t=2$ are only the raw quantized messages; \textit{i.e.} $\textbf{Q}_e(X_v)$'s:
\begin{equation}
Y_e(2)=\textbf{Q}_e(X_{tail(e)}),
\end{equation}
where $\textbf{Q}_e(\centerdot)$ is the quantizer operator, corresponding to edge $e$.\footnote{The design of $\textbf{Q}_e(\centerdot)$ depends on the value of $L$ and $C_e$.}
Denoting the quantization noise by $N_e(2)$, we have the following additive form:
\begin{equation}
Y_e(2)=X_{tail(e)}+N_e(2),~\forall e \in \mathcal{E}.
\end{equation}
These quantized version of neighboring node messages are used to calculate a random linear combinations of them.
Specifically, we define $P_v$, as the one-step linear combination of messages, at node $v$, according to:
\begin{equation}\label{Eq:ondeStepQNC}
P_v=\sum_{e' \in In(v)} \beta_{v,e'} Y_{e'}(2)+ \alpha_{v} X_v,~\forall v \in \mathcal{V}.
\end{equation}
In (\ref{Eq:ondeStepQNC}), the local network coding coefficients, $\beta_{v,e'}$ and $\alpha_{v}$, are uniformly and randomly picked from $\{-\kappa,+\kappa\}$, $\kappa>0$.
In order to prevent from over flow, we make sure that the normalization condition of Eq.~3 in \cite{naba1} is satisfied.

Some of these random linear combinations, $P_v$'s, are then transmitted to the decoder node via packet forwarding scheme.
Specifically, $P_v$ is sent to the decoder node with a probability of $\frac{m}{n}$, where $m$ is the number of received packets at the decoder node, by time $t$.
Representing the $i$'th received packet at the decoder node by $\underline{Z}_{\rm{tot}}(t)$, we have:
\begin{equation}
\{\underline{Z}_{\rm{tot}}(t)\}_{i}=P_v+N_e(3),~v \rightarrow i,
\end{equation}
where $v \rightarrow i$ means that $P_v$ is forwarded to decoder and corresponds to the $i$'th received packet.
Outgoing edge, $e \in Out(v)$ is also the edge on which $P_v$ is sent out of node $v$, and therefore $\textbf{Q}_e(\centerdot)$ is used for quantization of $P_v$.

Because of the linearity of QNC scenario, we can reformulate $\underline{Z}_{\rm{tot}}(t)$ as:
\begin{equation}\label{Eq:measEq}
\underline{Z}_{\rm{tot}}(t)=\Psi_{\rm{tot}}(t) \cdot \underline{X}+\underline{N}_{\rm{eff,tot}}(t),
\end{equation}
where $\Psi_{\rm{tot}}(t)$ and $\underline{N}_{\rm{eff,tot}}(t)$ are the total measurement matrix and the total effective noise vector, respectively, calculated according to:
\begin{equation}\label{Eq:defPsiTot}
\{\Psi_{\rm{tot}}(t)\}_{i,v}=
\left\{
\begin{array}{l l}
  \beta_{v',e'}  & ,~\scriptsize v' \rightarrow i, v \overset{e'}{\rightarrow} v', \\
  \alpha_{v'}  & ,~\scriptsize v' \rightarrow i, v'=v, \\
  0  &  ,~\mbox{otherwise} \\ \end{array} \right.
\end{equation}
\begin{equation}
\{\underline{N}_{\rm{eff,tot}}(t)\}_{i}=N_e(3)+\sum_{e' \in In(v)}\beta_{v,e'} ~N_{e'}(2),~v \rightarrow i. \label{Eq:defNeffTot}
\end{equation}
In (\ref{Eq:defPsiTot}), $v \overset{e'}{\rightarrow} v'$ denotes the existence of edge $e'$ from $v$ to $v'$.
%Moreover, in (\ref{Eq:defNeffTot}), $v$ is the node corresponding to the $i$'th received packet at the decoder node, $e$ is the corresponding edge on which $P_v(2)$ is forwarded.

Motivated by compressed sensing theory \cite{CS}, if we are able to recover the messages, $\underline{X}$, from fewer number of received packets than the number of messages, \textit{i.e.} $m <n$, then we have been able to perform an embedded non-adaptive compression of correlated messages.

Bayesian QNC was shown to be working better than packet forwarding by deducing the required number of received packets for robust recovery.
In the following, we present a theorem which offers robust recovery guarantee for one-step QNC scenario, as a simplification of (full) Bayesian QNC.
Specifically, it justifies the use of one-step QNC for transmission of near sparse Gaussian messages, in terms of the information content of received packets at the decoder; \textit{i.e.} random linear combinations, $P_v$'s.

\begin{theorem}\label{th:guarantee}
For the described one-step QNC scenario, with almost $k$-sparse messages, characterized as in section~\ref{sec:ProbDesc}, if 
\begin{equation}
{\kappa}= \sqrt{\frac{2n^2}{n+|\mathcal{E}|}}, \label{Eq:kappaCond}
\end{equation}
and,
\begin{equation}
m={O}\Big ( \frac{(1+\gamma)}{\mu^2 } \frac{k^3 n^2 \log(k n^\gamma)}{n+|\mathcal{E}|} \log(n)  \Big ),
\end{equation}
where $\gamma,\mu>0$, one can decode $\underline{X}$, from the measurements of the form $\Psi_{\rm{tot}}(t) \underline{X}$, to $\underline{\hat{X}}(t)$, such that:
\begin{equation}
\vectornormInf{\underline{X}-\underline{\hat{X}}(t)} \leq \mu \sigma_s, \label{Eq:errBound}
\end{equation}
with a probability exceeding $1-n^{-\gamma}$.
\footnote{A complete definition of $O(\centerdot)$ and $o(\centerdot)$ can be found in \cite{oNotation}.}
\end{theorem}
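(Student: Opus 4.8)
The plan is to reduce the statement to a robust sparse-recovery problem in the sparsity domain of $\phi$, and then to verify — using the two independent sources of randomness in the construction, namely the random edge placement and the random $\pm\kappa$ coding coefficients — that the induced measurement matrix is well conditioned on $k$-sparse supports. Since $\underline{X}=\phi\underline{S}$ with $\vectornormZero{\underline{S}}\simeq k$, recovering $\underline{X}$ from $\Psi_{\rm{tot}}(t)\underline{X}$ is equivalent to recovering $\underline{S}$ from $A\underline{S}$ with $A:=\Psi_{\rm{tot}}(t)\,\phi$. First I would split $\underline{S}=\underline{S}_k+\underline{S}_{-k}$, where $\underline{S}_k$ keeps the coordinates in the random ``large'' set $\mathcal{K}=\{v:Q_v=1\}$ and $\underline{S}_{-k}$ the $\mathcal{N}(0,\sigma_z^2)$ coordinates; by a Chernoff bound $|\mathcal{K}|$ concentrates around $k$, and by Gaussian tail bounds $\vectornorm{\underline{S}_{-k}}=O(\sqrt{n}\,\sigma_z)$ and $\vectornormInf{\underline{S}_{-k}}=O(\sqrt{\log n}\,\sigma_z)$ off an event of probability $o(n^{-\gamma})$ (the near-sparse hypothesis $\sigma_s\gg\sigma_z$ making the latter quantity $o(\mu\sigma_s)$). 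Then $A\underline{S}=A\underline{S}_k+A\underline{S}_{-k}$, and the task becomes recovering the exactly $k$-sparse $\underline{S}_k$ from a measurement corrupted by the bounded perturbation $A\underline{S}_{-k}$.

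Next I would study $A$. A row of $\Psi_{\rm{tot}}(t)$ tied to a forwarded node $v'$ is supported on $\{v'\}\cup\{v:v\overset{e'}{\rightarrow}v'\}$ with its nonzeros i.i.d.\ uniform on $\{-\kappa,+\kappa\}$; since $|In(v')|$ is $\mathrm{Binomial}(|\mathcal{E}|,1/n)$ with mean $|\mathcal{E}|/n$, one gets $\mathbb{E}\big[\Psi_{\rm{tot}}(t)^{\top}\Psi_{\rm{tot}}(t)\big]=\kappa^2\,m\,\tfrac{n+|\mathcal{E}|}{n^2}\,I$, and the prescribed $\kappa=\sqrt{2n^2/(n+|\mathcal{E}|)}$ is exactly the normalization that makes the expected squared column norm of $A$ equal to $2m$, independently of the edge density. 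I would then bound the mutual coherence of the column-normalized matrix $\tilde A=A/\sqrt{2m}$: for $j\neq j'$, $\langle \tilde A_j,\tilde A_{j'}\rangle$ is, conditioned on the topology, a sum over the $m$ received packets of zero-mean terms (the signs kill the cross terms), so a Hoeffding/Bernstein estimate together with a union bound over the $O(n^2)$ pairs of columns (which supplies the $\log n$ factor) and a further Chernoff bound absorbing the fluctuations of the $|In(\cdot)|$'s gives $\mu(\tilde A)=O\big(\mathrm{poly}(k,n,|\mathcal{E}|)\sqrt{\log(n)/m}\big)$ off an event of probability $O(n^{-\gamma})$; the stated $m$ is what forces this below the threshold required below.

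Armed with incoherence, I would invoke a coherence-based (equivalently RIP-based, along the lines of \cite{naba2}) robust recovery guarantee for $\ell_1$-minimization, or for the BP--MMSE decoder of \cite{naba3}: when $\mu(\tilde A)$ is small compared to $1/k$, the estimate $\underline{\hat S}$ of $\underline{S}_k$ obeys a per-coordinate bound of the form $\vectornormInf{\underline{S}-\underline{\hat S}} = O\big(\mathrm{poly}(k)\,\mu(\tilde A)\,\vectornormInf{\underline{S}_k}\big)+O\big(\vectornormInf{\underline{S}_{-k}}\big)$. I would then bound $\vectornormInf{\underline{S}_k}$ by the maximum of $k$ Gaussians, $\vectornormInf{\underline{S}_k}=O\big(\sqrt{\log(kn^\gamma)}\,\sigma_s\big)$ with probability $1-o(n^{-\gamma})$ — carrying the exponent $\gamma$ through this tail is where the $(1+\gamma)\log(kn^\gamma)$ term appears — while $\vectornormInf{\underline{S}_{-k}}=o(\mu\sigma_s)$ from the first step. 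Substituting $\mu(\tilde A)$ and these two bounds shows that the stated choice of $m$ makes the coordinatewise error of $\underline{S}$ at most a small multiple of $\mu\sigma_s$; and since $\phi$ is orthonormal with unit-norm rows, $\vectornormInf{\underline{X}-\underline{\hat X}(t)}\le\vectornorm{\underline{S}-\underline{\hat S}}$, so passing back to $\underline{X}$ and to the $\ell_\infty$ norm costs only constants. A final union bound over all the failure events, each of order $n^{-\gamma}$, yields the success probability $1-n^{-\gamma}$.

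The step I expect to be the main obstacle is the coherence (or RIP) analysis of $A=\Psi_{\rm{tot}}(t)\,\phi$. Unlike a plain sub-Gaussian ensemble, the randomness here is divided between the graph — which simultaneously fixes the sparsity pattern of each row and the overlaps between column supports — and the sign coefficients, and $\phi$ is an arbitrary orthonormal matrix, so each column of $A$ is a random sparse mixture of rows of $\phi$ whose worst case over $\phi$ must still be controlled. Disentangling these two layers cleanly, and in particular extracting the precise dependence on $k$, $n$, $|\mathcal{E}|$ and $\gamma$ that matches the stated $m$ — the relatively heavy $k^3$ and the $n^2/(n+|\mathcal{E}|)$ factor both emerge here, the latter being tied to the choice of $\kappa$ — is the crux of the argument.
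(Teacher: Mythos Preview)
Your strategy is plausible but takes a substantially different route from the paper. The paper does \emph{not} pass to the sparsity domain, does not study $A=\Psi_{\rm tot}(t)\phi$, and does not prove any coherence or RIP estimate. Instead it works directly with $\underline{X}$ and with the raw entries of $\Psi_{\rm tot}(t)$, and reduces the whole argument to two ingredients that can be plugged into an existing black-box recovery theorem (Theorem~1 of \cite{wang2007distributed}, via the template of Theorem~1 in \cite{baron2010bayesian}). First, using orthonormality of $\phi$ and Gaussian tail bounds, it bounds the \emph{norm ratio} $\vectornormInf{\underline{X}}^2/\vectornorm{\underline{X}}^2\le 2\log(kn^\gamma)$ with the claimed probability; the handling of a non-identity $\phi$ is an extreme-rotation argument that gives $\vectornormInf{\underline{X}}\le\sqrt{2k\log(kn^\gamma)}\,\sigma_s$. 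Second, it computes the distribution of a single entry $\{\Psi_{\rm tot}(t)\}_{iv}$ from the random-graph model and the random $\pm\kappa$ signs, showing that with the prescribed $\kappa$ one gets $\textbf{E}[\{\Psi_{\rm tot}\}_{iv}]=0$, $\textbf{E}[\{\Psi_{\rm tot}\}_{iv}^2]=1$, and $\textbf{E}[\{\Psi_{\rm tot}\}_{iv}^4]=\kappa^2=2n^2/(n+|\mathcal{E}|)$. The stated form of $m$, including the $k^3$, the $(1+\gamma)\log(kn^\gamma)$ and the $n^2/(n+|\mathcal{E}|)$, then comes straight out of the cited theorem with the choice $\epsilon=\mu/\sqrt{2k}$; nothing like a coherence or RIP bound is ever established.

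What this buys the paper is precisely the avoidance of the obstacle you flag at the end: controlling the coherence of $\Psi_{\rm tot}(t)\phi$ uniformly over orthonormal $\phi$ when the rows of $\Psi_{\rm tot}(t)$ are sparse and their supports are coupled through the graph. The Wang--Baron route only needs per-entry moment conditions on the measurement matrix and a bound on $\vectornormInf{\underline{X}}/\vectornorm{\underline{X}}$, so $\phi$ enters only through the latter ratio and the two layers of randomness never have to be disentangled. Your outline, if pushed through, would be a self-contained argument rather than an appeal to \cite{wang2007distributed,baron2010bayesian}; but be aware that the final step, ``passing back to $\underline{X}$ costs only constants,'' is loose as written (you use $\vectornormInf{\underline{X}-\underline{\hat X}}\le\vectornorm{\underline{S}-\underline{\hat S}}$, which in general picks up a $\sqrt{k}$ or worse), and matching the exact exponent $k^3$ from a pure coherence analysis may not be straightforward.
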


\begin{proof}
\footnote{Our proof uses a similar approach as in the proof of theorem~1 in \cite{baron2010bayesian}.}
To prove the theorem, we begin by finding probabilistic bounds on the $\ell_2$ and $\ell_\infty$ norms of $\underline{X}$.
Since $\phi$ is an orthonormal matrix, $\vectornorm{\underline{X}}=\vectornorm{\underline{S}}$, implying the same probabilistic $\ell_2$ norm of $\underline{X}$, as that of $\underline{S}$. Therefore, by using Eqs.~8,10 in \cite{baron2010bayesian}, we have:
\begin{eqnarray}
\textbf{P}( \vectornorm{\underline{X}}^2 < k \sigma^2_s) & = & {o}(n^{-\gamma}), \label{Eq:l2normbound1} \\
\textbf{P}(\vectornorm{\underline{X}}^2 > 2k \sigma^2_s + (n-k)\sigma^2_z ) & = & {o}(n^{-\gamma}). \label{Eq:l2normbound2}
\end{eqnarray}

To obtain a bound on $\ell_\infty$ norm of $\underline{X}$, we use Eq.~17 in \cite{baron2010bayesian}.
However, since their discussion is for a canonical case of $\phi$, we need to take care of non-diagonal $\phi$, in our case.
Basically, the orthonormal $\phi$ rotates the axes in $\underline{S}$ domain to $\underline{X}$ domain.
Considering an extreme case in which all $k$ non-zero elements of $\underline{S}$ are at their maximum possible limit (say $\sqrt{2 \log(k n^{\gamma})} \sigma_s $ from (17) in \cite{baron2010bayesian}), $\ell_2$ norm of $\underline{S}$ is $\sqrt{k}$ times that limit.
Now, consider $\phi$ rotates the axes such that we have an axis along the same direction as that extreme case of $\underline{S}$.
In such case, the $\ell_\infty$ norm of $\underline{X}$ corresponds to the $\ell_2$ norm of $\underline{S}$ in the aforementioned extreme case, implying:
\begin{equation}
\textbf{P}(\vectornormInf{\underline{X}} < \sqrt{2 k \log(k n^{\gamma})} \sigma_s) > 1-\frac{n^{-\gamma}}{2},
\end{equation}
and $\frac{\vectornormInf{\underline{X}}^2}{\vectornorm{\underline{X}}^2} \leq {2\log(k n^\gamma)}$, with overwhelming probability.

Assuming (without loss of generality) that $P_{v'}(2)$ is forwarded to decoder and corresponds to the $i$'th received packet, for $\{\Psi_{\rm{tot}}(t)\}_{iv}$'s, we have:
\begin{eqnarray}
\textbf{P}(\{\Psi_{\rm{tot}}(t)\}_{iv} = 0) &  = & \textbf{P}(v \neq v',\not{\exists} e':v \overset{e'}{\rightarrow} v') \nonumber \\
&  = & \textbf{P}(v \neq v') ~ \textbf{P}(\not{\exists} e':v \overset{e'}{\rightarrow} v'|v \neq v') \nonumber \\
&=& \Big( 1-\frac{1}{n} \Big) \Big( 1-\frac{(|\mathcal{E}|/n)!\binom{n-1}{|\mathcal{E}|/n}}{(|\mathcal{E}|/n)!\binom{n-2}{|\mathcal{E}|/n-1}} \Big ) \nonumber \\
&  = & \Big( 1-\frac{1}{n} \Big)\Big( 1-\frac{|\mathcal{E}|}{n(n-1)} \Big).
\end{eqnarray}
Since $+\kappa$ and $-\kappa$ are picked with the same probability for network coding coefficients, we have:
\begin{eqnarray}
\textbf{P}(\{\Psi_{\rm{tot}}(t)\}_{iv} = +\kappa) &=& \textbf{P}(\{\Psi_{\rm{tot}}(t)\}_{iv} = -\kappa) \nonumber \\
&=& \frac{1}{2}-\frac{1}{2}\textbf{P}(\{\Psi_{\rm{tot}}(t)\}_{iv} = 0) \nonumber \\
&=& \frac{1}{2}-\frac{1}{2}\Big( 1-\frac{1}{n} \Big)\Big( 1-\frac{|\mathcal{E}|}{n(n-1)} \Big) \nonumber \\
&=& \frac{n+|\mathcal{E}|}{2n^2} \nonumber \\
&=& \frac{1}{\kappa^2}.
\end{eqnarray}
For $\forall i,v,~1 \leq i \leq m, 1 \leq v \leq n$, this implies:
\begin{eqnarray}
\textbf{E}[\{\Psi_{\rm{tot}}(t)\}_{iv}] &=& 0, \nonumber \\
\textbf{E}[\{\Psi_{\rm{tot}}(t)\}^2_{iv}] &=& 1, \nonumber \\
\textbf{E}[\{\Psi_{\rm{tot}}(t)\}^4_{iv}] &=& \kappa^2,
\end{eqnarray}
and therefore:
\begin{equation}
\kappa^2 \cdot {2\ln(k n^\gamma)} = \frac{4n^2 \ln(k n^\gamma)}{n+|\mathcal{E}|}.
\end{equation}
By applying theorem~1 in \cite{wang2007distributed}, using a similar reasoning as in the proof of theorem~1 in \cite{baron2010bayesian}, and choosing
\begin{equation}
\epsilon=\frac{\mu}{\sqrt{2k}},
\end{equation}
%
%\begin{equation}
%\frac{(1+\gamma)}{\epsilon^2 \eta^2} \frac{4n^2 \ln(k n^\gamma)}{n+|\mathcal{E}|} k^2 \ln(n).
%\end{equation}
we can finish the proof of our theorem.
\end{proof}
The following can also be established from theorem~\ref{th:guarantee}.
\begin{theorem}\label{th:guarantee2}
For the described one-step QNC scenario, if (\ref{Eq:kappaCond}) is satisfied, and,
\begin{equation}
|\mathcal{E}| = O(n^2),
\end{equation}
\begin{equation}
m={O} \Big ( \frac{(1+\gamma)}{\mu^2 } k^3 \log(k n^\gamma)\log(n)  \Big ),
\end{equation}
where $\gamma,\mu>0$, then (\ref{Eq:errBound}) holds with a probability, exceeding $1-n^{-\gamma}$.
\end{theorem}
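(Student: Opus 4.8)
The plan is to obtain Theorem~\ref{th:guarantee2} as a direct specialization of Theorem~\ref{th:guarantee}: the hypothesis on $\kappa$ in (\ref{Eq:kappaCond}) is identical in the two statements, so nothing has to be re-derived there, and the only task is to track how the sufficient number of received packets $m$ from Theorem~\ref{th:guarantee} collapses once the edge count is forced to be of order $n^2$.

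First I would recall that in the random network of section~\ref{sec:ProbDesc} the number of edges never exceeds $n(n-1)$, so assuming $|\mathcal{E}|=O(n^2)$ in fact places us in the \emph{dense} regime $|\mathcal{E}|=\Theta(n^2)$. Writing $|\mathcal{E}| = c\,n^2$ for a constant $c\in(0,1)$ gives $n+|\mathcal{E}| = n + c n^2 = \Theta(n^2)$, hence
\begin{equation}
\frac{n^2}{\,n+|\mathcal{E}|\,} = \Theta(1).
\end{equation}
The same computation incidentally shows that $\kappa=\sqrt{2n^2/(n+|\mathcal{E}|)}=\Theta(1)$ under this regime, so the coefficient magnitudes and the normalization condition of Eq.~3 in \cite{naba1} stay bounded, which is reassuring but not needed for the argument.

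Next I would substitute this bounded factor into the expression for $m$ provided by Theorem~\ref{th:guarantee}:
\begin{equation}
m = {O}\Big( \frac{1+\gamma}{\mu^2}\, \frac{k^3 n^2 \log(k n^\gamma)}{n+|\mathcal{E}|}\,\log(n)\Big) = {O}\Big( \frac{1+\gamma}{\mu^2}\, k^3 \log(k n^\gamma)\,\log(n)\Big),
\end{equation}
absorbing $n^2/(n+|\mathcal{E}|)$ into the constant hidden by the $O(\cdot)$ notation. With this choice of $m$ all the hypotheses of Theorem~\ref{th:guarantee} are satisfied, so its conclusion transfers verbatim: one can recover $\underline{\hat{X}}(t)$ from measurements of the form $\Psi_{\rm{tot}}(t)\underline{X}$ with $\vectornormInf{\underline{X}-\underline{\hat{X}}(t)}\le\mu\sigma_s$ and with probability at least $1-n^{-\gamma}$, which is exactly (\ref{Eq:errBound}).

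There is no genuinely hard step here; the one point that needs care is reading $|\mathcal{E}|=O(n^2)$ as a \emph{tight} order (equivalently $|\mathcal{E}|=\Omega(n^2)$, together with the trivial upper bound $|\mathcal{E}|\le n(n-1)$), since an upper bound on $|\mathcal{E}|$ alone would not let us control $n^2/(n+|\mathcal{E}|)$ by a constant. The conceptual point worth stressing in the write-up is that in this dense regime the required number of received packets becomes independent of the network size $n$ up to logarithmic factors, which is precisely the embedded distributed compression promised in section~\ref{sec:Intro}.
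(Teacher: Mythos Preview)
Your proposal is correct and matches the paper's own approach: the paper provides no separate proof of Theorem~\ref{th:guarantee2} beyond the remark that it ``can also be established from theorem~\ref{th:guarantee},'' and your derivation makes explicit exactly the substitution $n^2/(n+|\mathcal{E}|)=\Theta(1)$ that this entails. Your cautionary note that $|\mathcal{E}|=O(n^2)$ must be read as a tight order (i.e.\ $\Theta(n^2)$) for the reduction to go through is well taken and is implicit in the paper's phrasing ``highly connected network deployment (high number of edges).''
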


Theorem~\ref{th:guarantee2} states that in a (relatively) highly connected network deployment (high number of edges), we need smaller order of received packets, $m$, as the order of messages, $n$, to be able to recover the messages.
This saving implies an embedded distributed compression, which is achieved without adapting the local network coding coefficients (We can pick $\alpha_{v}$'s and $\beta_{v,e}$'s from $\{-1,+1\}$ and then scale the received packets by an appropriate constant result in the same $\Psi_{\rm{tot}}(t)$ as picking them from $\{-\kappa,+\kappa\}$).
In other words, the total measurement matrix, resulting from one-step QNC scenario has similar characteristics as an appropriate measurement matrix for Bayesian compressed sensing.
Therefore, a similar recovery guarantee can be offered for our case of one-step QNC.
%However, these results discuss the simple case of one-step QNC, and further future work may lead to derive similar (improved) conclusions for (full) QNC scenario \cite{naba3}.

\section{Simulation Results}
\label{sec:simRes}

\begin{figure*}[t!]
\centering
\subfigure[$|\mathcal{E}|=400$ edges]{
\resizebox{.78\textwidth}{!}{
\includegraphics{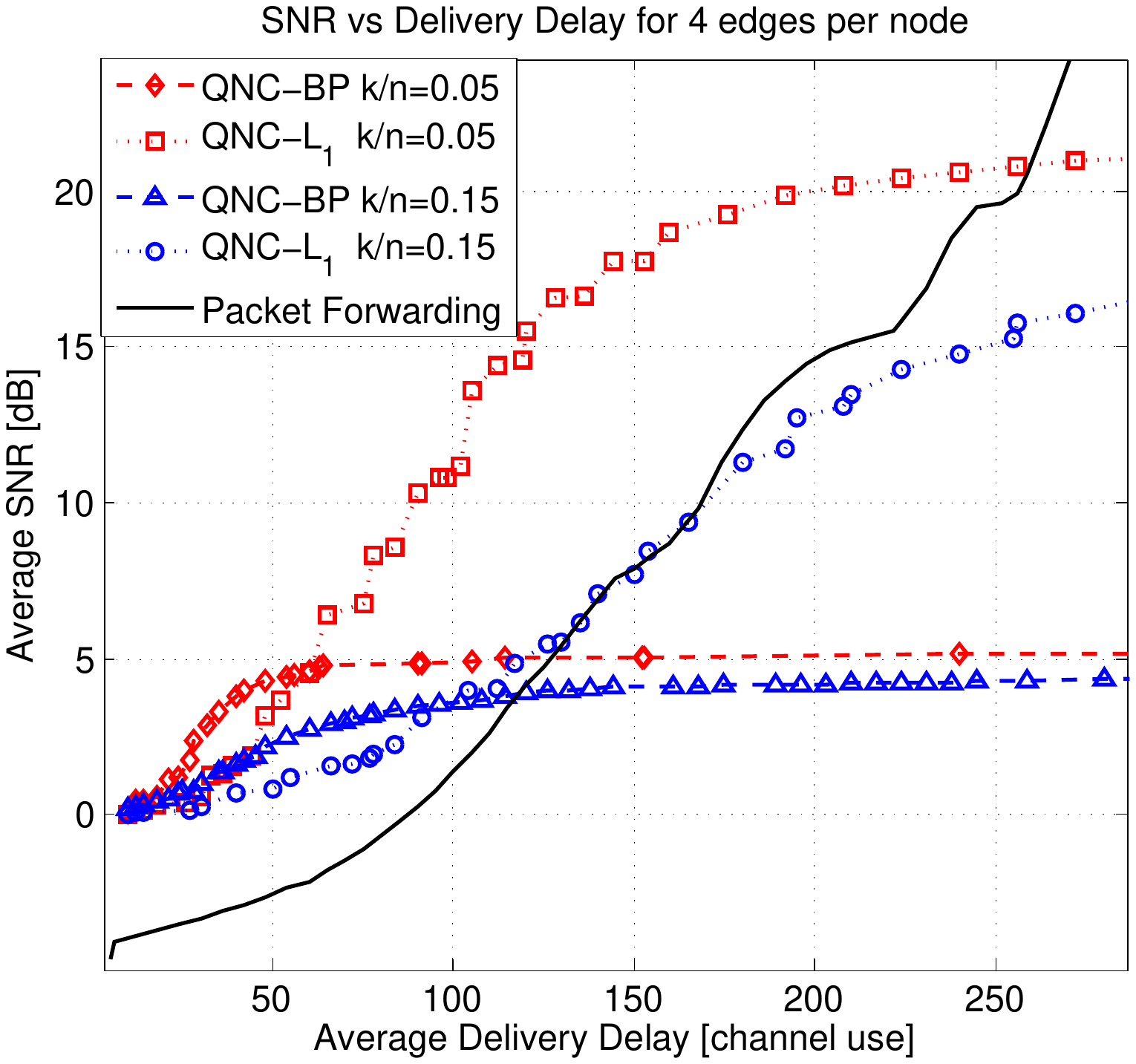}}
\label{fig:4}
} \\
\subfigure[$|\mathcal{E}|=800$ edges]{
\resizebox{.78\textwidth}{!}{
\includegraphics{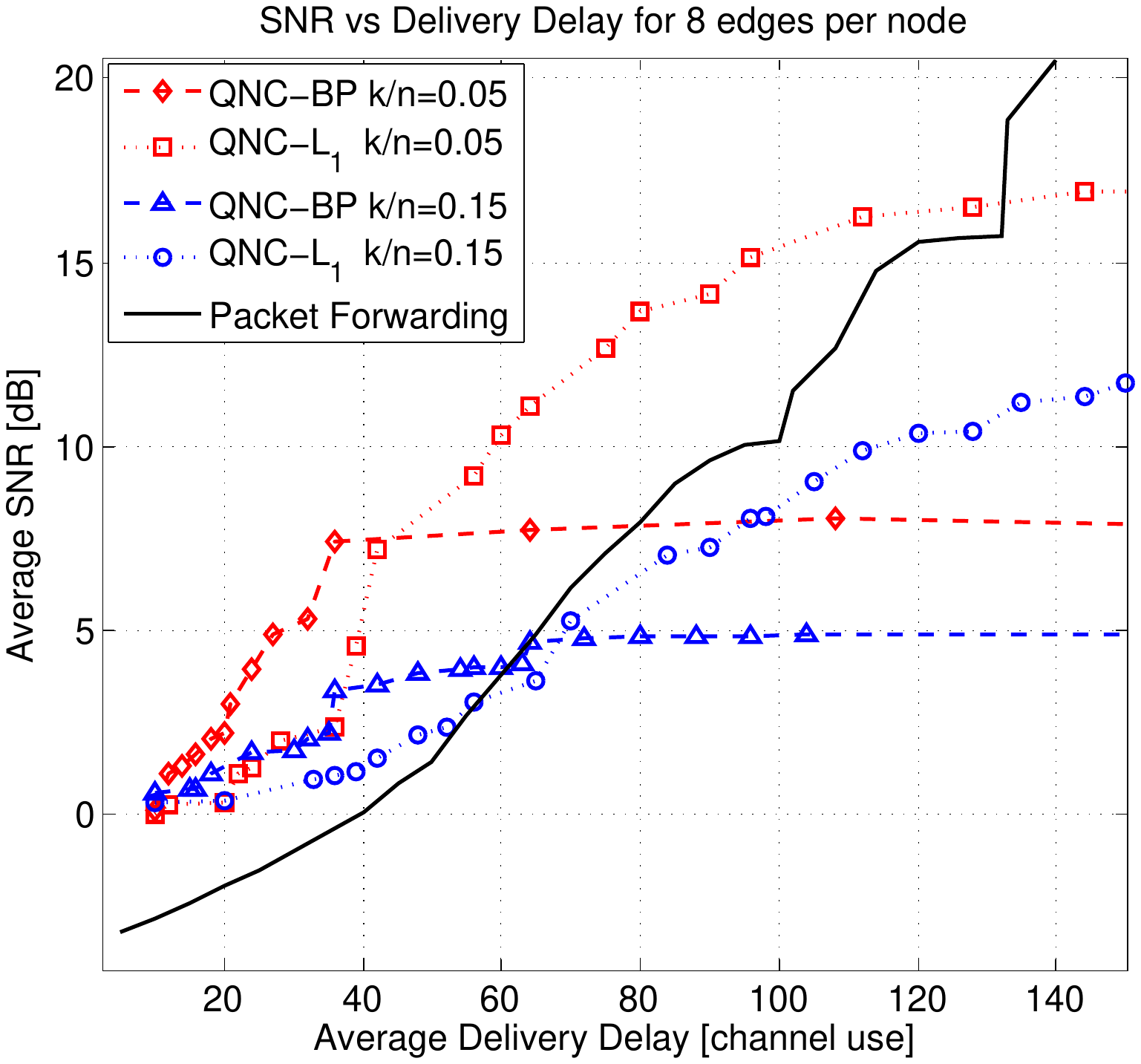}}
\label{fig:8}
} 
\caption{Average SNR versus average delivery delay of one-step QNC and packet forwarding for different sparsity factors, $\frac{k}{n}$, and different number of edges\label{fig:subfigureExample}.
}
\end{figure*}

In this section, we compare the performance of one-step QNC with that of routing based packet forwarding.
To do so, different deployments of a sensor network with $n=100$ nodes and $|\mathcal{E}|=400,800$ uniformly distributed edges are generated randomly.
The edges are directed and can maintain a lossless interference-free communication of $C_e=1$ bit per use.
Messages are also randomly generated using a random $\phi$ according to the model of Eq.~\ref{Eq:modelMess}, for $\sigma^2_s=1$ and $\sigma^2_z=0.01$, and different sparsity factors: $\frac{k}{n}=0.05,0.15$.

For each deployment, one-step QNC is run by using a uniform quantizer of appropriate step size (depending on the value of block length, $L$ and putting the dynamic range between $-4 \sigma_s$ and $+4 \sigma_s$).
In a progressive manner, we use a routing based packet forwarding to deliver all $P_v$'s to the decoder node.
Then, at each time, $t$, the received packets up to $t$ are used to recover the messages, which lets us obtain different quality and delay performance points.
Specifically, we use both Belief Propagation (BP) based minimum mean square error decoding \cite{naba3} and $\ell_1$-min decoding \cite{naba1} to recover messages, from the received $P_v$'s up to $t$

In a different scenario, for each deployment, packet forwarding via optimal route is run and messages are delivered to the decoder node.
Each message, $X_v$, is quantized at its source by using a similar uniform quantizer as used in one-step QNC scenario.
The delivered quantized messages up to time $t$ are taken care for calculating the corresponding signal to noise ratio.
Moreover, the routes from nodes to the decoder node are calculated by using Dijkstra algorithm \cite{dijkstra1959note}. 

To obtain a quality-delay diagram of the performance, we find the best block length, $L$, for each SNR value, corresponding to each one-step QNC and packet forwarding scenario.
The resulting delivery delays are then averaged over different deployments of network.
The average SNR (quality measure) is depicted versus the average delivery delay (cost measure), for $400$ and $800$ edges in Fig.~\ref{fig:4} and Fig.~\ref{fig:8}, respectively.

As it was expected from the mathematical derivations of section~\ref{sec:QNC}, the average delivery delay of one-step QNC for a given SNR is less than that of packet forwarding, in a wide range of SNR values.
Smaller a sparsity factor, $\frac{k}{n}$, meaning higher level of correlation between messages, results in a better performance.
Moreover, as shown in Figs.~\ref{fig:4},\ref{fig:8}, larger number of edges (higher edge densities) increases the performance gap between the one-step QNC and packet forwarding scenarios.
Although the resulting performance of one-step QNC may not look promising for all SNR values, as it was illustrated in \cite{naba3}, using (full) QNC scenario helps us get a significant improvement for all SNR values.

\section{Conclusions}
\label{sec:Conclusions}
Theoretical motivations behind Bayesian QNC are discussed by deriving mathematical guarantees for robust recovery of near sparse Gaussian messages, in a simple QNC scenario, called one-step QNC.
Our derived conditions for robust recovery, which bounds the $\ell_{\infty}$ norm of error, show that one-step QNC requires smaller order of received packets at the decoder than that of messages.
This implies an embedded distributed compression, while there is no need to adapt the encoding with the correlation model of messages (\textit{i.e.} sparsifying transform, $\phi$).
However, more mathematical works are still necessary to analyze the (full) Bayesian QNC scenario to obtain tighter bounds for robust recovery of messages.

\bibliographystyle{IEEEbib}
\bibliography{Ref_arXiv}
\end{document}